\documentclass[twoside,11pt]{article}

%

%
%
%
\usepackage[abbrvbib, preprint]{jmlr2e}

\usepackage{xcolor}
\usepackage{colortbl}

\usepackage{floatrow}
\newfloatcommand{capbtabbox}{table}[][\FBwidth]

\usepackage{comment}
\usepackage{booktabs}       

\usepackage{tikz} 
\usepackage{url}
\usepackage{hyperref}
\usepackage{graphicx}
\usepackage{dsfont}
\usepackage{amsmath}
\usepackage{amssymb}
\usepackage{graphics}

\usepackage{empheq}

\usepackage{algorithm}
\usepackage{algorithmicx}
\usepackage{algpseudocode}

\algnewcommand\algorithmicinput{\textbf{Input:}}
\algnewcommand\algorithmicoutput{\textbf{Output:}}
\algnewcommand\Input{\item[\algorithmicinput]}%
\algnewcommand\Output{\item[\algorithmicoutput]}%
\newcommand{\algrule}[1][.2pt]{\par\vskip.5\baselineskip\hrule height #1\par\vskip.5\baselineskip}
\newcommand{\thickrule}[1][.5pt]{\par\vskip.5\baselineskip\hrule height #1\par\vskip.5\baselineskip}

\usepackage{graphicx}
\usepackage{subcaption}
\usepackage{listings}

\usepackage{bbm}
\usepackage{multirow}
\usepackage{rotating}
\usepackage{multicol}

\def\EE{\mathbb{E}}

\usepackage{tcolorbox}
\usepackage{multirow}

\newtcolorbox{mybox}[3][]
{
  colframe = #2!80,
  colback  = #2!10,
  coltitle = #2!10!black,
  title    = {#3},
  boxsep   = 0.25pt,
  left     = 0.5pt,
  right    = 0.5pt,
  top      = 0pt,
  bottom   = 0pt,
  width=\linewidth,
  #1,
}
\DeclareCaptionType{alg}[\textbf{Algorithm}][]

\usepackage{url}

\usepackage{tikz}


\definecolor{Maroon}{RGB}{204, 102, 0}

\definecolor{rulecolor}{rgb}{0.0, 0.06, 0.54}
\definecolor{tableheadcolor}{rgb}{0.88, 0.94, 0.87}
\definecolor{orange}{rgb}{1, 0.49, 0}
\definecolor{bluecolor}{rgb}{0.74, 0.83, 0.9}

\begin{document}

\title{Radial Spike and Slab Bayesian Neural Networks for Sparse Data in Ransomware Attacks}

\author{\name Jurijs Nazarovs\thanks{Work performed at Microsoft.} \email nazarovs@wisc.edu \\
       \addr Department of Statistics\\
       University of Wisconsin\\
       Madison, WI 53706, USA
       \AND
       \name Jack W.\ Stokes \email jstokes@microsoft.com \\
       \addr Microsoft\\
       Redmond, WA 98052, USA
       \AND
       \name Melissa Turcotte{\footnotemark[1]}\email melissa.turcotte@gmail.com \\
       \addr Securonix\\
       Santa Fe, NM 87506, USA
       \AND
       \name Justin Carroll \email justin.carroll@microsoft.com \\
       \addr Microsoft\\
       Redmond, WA 98052, USA
       \AND
       \name Itai Grady \email igrady@microsoft.com \\
       \addr Microsoft\\
       Redmond, WA 98052, USA       \\
       }

\maketitle

Ransomware attacks
are increasing at an alarming rate, leading to large financial losses, unrecoverable encrypted data, data leakage, and privacy concerns. The prompt detection of ransomware attacks is required to minimize further damage, particularly during the encryption stage. However, the frequency and structure of the observed ransomware attack data makes this task difficult to accomplish in practice. The data corresponding to ransomware attacks represents temporal, high-dimensional sparse signals, with limited records and very imbalanced classes. While traditional deep learning models have been able to achieve state-of-the-art results in a wide variety of domains, Bayesian Neural Networks, which are a class of probabilistic models, are better suited to the issues of the ransomware data. These models combine ideas from Bayesian statistics with the rich expressive power of neural networks. In this paper, we propose the Radial Spike and Slab Bayesian Neural Network, which is a new type of Bayesian Neural network that includes a new form of the approximate posterior distribution. The model scales well to large architectures and recovers the sparse structure of target functions. We provide a theoretical justification for using this type of distribution, as well as a computationally efficient method to perform variational inference. We demonstrate the performance of our model on a real dataset of ransomware attacks and show improvement over a large number of baselines, including state-of-the-art models such as Neural ODEs (ordinary differential equations). In addition, we propose to represent low-level events as MITRE ATT\&CK tactics, techniques, and procedures (TTPs) which allows the model to better generalize to unseen ransomware attacks.

\begin{keywords}
  Bayesian neural networks, Ransomware detection
\end{keywords}

\section{Introduction}
Ransomware attacks are increasing rapidly and causing significant losses to governments, corporations, non-governmental organizations, and individuals.
The losses may include financial costs due to ransoms paid to decrypt assets, unrecoverable files
when the ransom is not paid or the attacker fails to provide the decryption key, privacy and intellectual property theft when assets are exported, and even significant injury when ransomware impairs health care devices or patient records in hospitals.
It is clear that the timely detection of ransomware incidents is necessary in order to minimize the number of assets that are encrypted or exfiltrated~\cite{9392548}. To improve the ransomware response,
this work proposes a new Bayesian Neural Network model that offers improved detection rates for organizations which employ analysts to protect
their assets and networks.

The problem is usually considered as a detection task, where the two classes are ransomware or not.
The traditional methods of statistics and machine learning have been proposed to detect security threats in general and specifically ransomware in some cases.
From the statistical perspective, a common approach is the application of Bayesian Networks
~\cite{perusquia2020bayesian,oyen2016bayesian,shin2015development}, whose main goal is to model the relationship between the observed signal and the class of the attack as a graphical model.
From the machine learning perspective, a range of models were used to detect ransomware~\cite{alhawi2018leveraging,poudyal2018framework,zhang2019classification,larsen2021survey}, such as Naive Bayes, Gradient Boosting, and Random Forests.

{\bf Bottleneck.} To obtain the rich expressive power of traditional deep learning models, training usually requires having access to a large number of records to successfully obtain robust generalized results.
Unfortunately, the frequency and structure of commonly observed data corresponding to ransomware attacks makes this task more difficult to accomplish. In particular, ransomware attack data can be represented as temporal high-dimensional sparse signals, with a limited number of records and very imbalanced classes. In our data, the percentage of ransomware attacks to non-ransomware attacks is 1\% versus 99\%, respectively.

{\bf Main ideas and contributions.} To address these unique features of the ransomware data, we first propose to represent ransomware signals according
their MITRE ATT\&CK tactics, techniques, and procedures (TTPs) which allows us to generalize ransomware and other attacks at a higher-level instead of the low-level detections associated with an individual attack. In addition, this allows for the detection of both human operated and automated ransomware attacks across multiple stages in the
kill chain within an organization's network.
Next,
we propose a new probabilistic model which is called the Radial Spike and Slab Bayesian Neural Network. It is a Bayesian Neural Network, where the approximate posterior is represented by a mixture of distributions, resulting in a Radial Spike and Slab distribution. Our model provides the following benefits including:
\begin{itemize}
\item the Spike and Slab component handles missing or sparse data,
\item the Radial component scales well with the growth of the number of parameters in the deep neural network, and
\item the Bayesian component prevents overfitting in the limited data setup. 
\end{itemize}
From the theoretical perspective, we provide the justification for using this type of distribution, as well as a computationally efficient method to perform variational inference.
In the results section, we demonstrate the performance of our model on a set of actual ransomware attacks and show improvement over a number of baselines, including the state-of-the-art temporal models such as RNNs~\cite{cho2014learning} and Neural ODEs (ordinary differential equations)~\cite{chen2018neural}. Thus, the proposed model
is an important tool for the critical problem of ransomware detection. 
\section{Incident Data Description}
\label{sec:data}
This work utilizes threat data provided by
Microsoft Defender for Endpoint
and Microsoft Threat Experts, Microsoft’s
managed threat hunting service NeurIPS
to detect ransomware and other types of cybersecurity attacks.
Low-level event generators are manually created by analysts (i.e., rules) and are provided with a UUID.

\textbf{Features.} Given each incident, features need to be extracted which capture the range of attack behaviors observed across the kill chain and represent common behaviors across the different families of ransomware attacks. The low-level events cannot be used directly because there are too many to train our model, given the number of labeled examples, and they do not generalize well individually. To overcome
these problems, we map a subset of the low-level events into a higher-level representation using the MITRE ATT\&CK framework~\cite{mitreattack}.  We chose the MITRE ATT\&CK framework for the mapping because it provides a knowledge base of adversary tactics, techniques, and procedures (TTPs) and is widely used across the industry for classifying attack behaviors and understanding the lifecycle of an attack. Using the MITRE ATT\&CK TTPs is a natural choice for features as it is generalizable, interpretable, and easy to acquire for this data as each low-level event from
Microsoft
is tagged with the MITRE technique associated with the alerted behavior~\cite{mitreattack}.
For example, one of the features can represent whether `OS Credential Dumping' happened or not. Additional MITRE ATT\&CK features are included in Table~\ref{tab:import},
and the entire set is provided by the MITRE corporation~\cite{mitre}.
In total, our data is a sparse high-dimensional vector of size 706, which contains 298 MITRE ATT\&CK features and 408 additional rule-based features, at each time point.
However, one of the characteristics of the data is sparsity because only very few actions are completed at each time step during the attack.

\textbf{Labels.} Using manual investigation, an analyst provides a label for each incident indicating whether it is due to a ransomware attack or another type of attack.
The ransomware incidents include both
human operated ransomware (HumOR) and automated ransomware attacks.
However, our positive class label only indicates that an attack is ransomware and does not distinguish between the two classes of ransomware (i.e., HumOR, Automated).
Our goal is to build an alarm-recommendation system, which can not only detect a possible ransomware attack, but also provide an estimate of the uncertainty about the decision. We provide additional details about the training and testing data in Section~\ref{sec:experiment}.

\section{Methodology}
Important features of probabilistic models, such as providing a notion of uncertainty, dealing with missing data, and preventing overfitting in a limited data regime, have generated a strong interest in deep Bayesian learning.
In this section, we provide more details regarding Bayesian Neural Networks, including different aspects of initializing and training the model. We then propose the Radial Spike and Slab Bayesian Neural Network model to address the problems of the ransomware data.

\textbf{Bayesian Neural Networks.}
The main idea behind the Bayesian Neural Network is to consider all weights as being samples from a random distribution.
Formally, we denote the observed data as $(x, y)$, where $x$ is an input to the network, and $y$ is a corresponding response.
Let all weights of a BNN, $W=(W^1, \ldots, W^D)$, be a random vector, where $D$ is the depth (i.e., number of layers) of the BNN and each $W^j = (w^{j, 1}, \ldots, w^{j, l_j})$ is a random vector itself of all weights $w^{j,k}$ per layer $W^j$ of size $l_j$.
To generate uncertainty of the prediction, we need to be able to compute $p(y|x)$. However, since all weights of a BNN are considered to be random variables, we can rewrite the conditional probability as
$p(y|x) = \int_w p(y,W|x)dW = \int_W p(y|W, x) p(W|x)dW$. Typically, the likelihood term $p(y|W, x)$ is defined by the problem setup, e.g., if we consider classification, as in ransomware incident detection, $y \sim Bern(g(W, x))$ for some function $g$. Then, the main problem of training a BNN is to compute the posterior probability $p(W|x)$, given the observed data $x$ and a suitable prior for $W$.

In some simple cases of small neural networks, it may be possible to obtain a closed-form solution for the posterior if the prior and posterior are conjugate distributions.
In other cases, if a closed-form solution is unavailable, sampling-based strategies are required such as Markov Chain Monte Carlo schemes based on Gibbs or Metropolis Hasting samplers.
While such an approach provides excellent statistical behavior with theoretical support, scalability as a function of the dimensionality of the problem is known to be a serious issue.
The alternative for machine learning and vision problems is Variational Inference (VI)~\cite{graves2011practical}.
The core concept of VI is based on the fact that approximating the true posterior with another distribution  may often be acceptable in practice.
The computational advantages of VI permit estimation procedures
in cases which would not otherwise be feasible.
VI is now a mature technology, and its success has led to a number of follow-up developments focused on
theoretical as well as practical aspects~\cite{blundell2015weight}.

When using VI in Bayesian Neural Networks,
we approximate the true unknown posterior distribution $P(W|x)$ with an \textit{approximate posterior} distribution $Q_\theta$ of \textit{our choice}, which depends on learned parameters $\theta$. Let $W_\theta=(W_\theta^{1}, \ldots, W_\theta^{D})$ denote a random vector with distribution $Q_\theta$ and probability distribution function (pdf) $q_\theta$.
VI seeks to find $\theta$ such that $Q_\theta$ is as close as possible to the real (unknown) posterior $P(W|x)$, and this is accomplished by minimizing the Kullback–Leibler ($KL$) divergence between $Q_\theta$ and $P(W|x)$. Given a prior pdf of weights, $p$, with a likelihood term $p(y|W, x)$, and the common \textit{mean field} assumption of independence for $W^d$ and $W_\theta^d$ for $d \in 1, \ldots, D$, i.e., $p(W) = \prod_{d=1}^D p^d (W^d)$ and $q_\theta (W_\theta) = \prod_{d=1}^D q_\theta^d (W_\theta^d)$, 

\begin{align}
    &\boldsymbol{\theta}^{*} = \underset{\theta}{\arg \min}\
    KL\left(q_\theta||p\right) -
    \EE_{q_{\theta}}\left[\ln p(y| W, x)\right]\label{eq:vi}
\end{align}
\begin{align}
    &KL\left(q_\theta||p\right) =\sum_{d=1}^D\EE_{q_{\theta}^d}\left[ \ln q_{\theta}^d(w)\right] -
        \EE_{q_{\theta}^d}\left[\ln p^d(w)\right].
        \label{eq:kl}
\end{align}

By definition of the expected value $\EE_{q_\theta}$, it is necessary to compute the multi-dimensional integral w.r.t $w\sim Q_\theta$ to solve \eqref{eq:vi}.
If such integrals are impossible to compute in a closed-form, a numerical approximation is used \cite{ranganath2014black,paisley2012variational,miller2017reducing}.
For example, Monte Carlo (MC) sampling yields an asymptotically exact, unbiased estimator with variance $O(\frac{1}{M})$, where $M$ is the number of samples. For a function $g(\cdot)$:
\begin{align}
    \mathbb{E}_{q_\theta}\left[g(w)\right]
    =
    \int{g(w)q_\theta(w)dw}
    \approx \frac{1}{M}\sum_{i=1}^Mg(w_i),
    \text{ where } w_i \sim Q_\theta.
\label{eq:mc-intro}
\end{align}
The expected value terms in \eqref{eq:vi} and \eqref{eq:kl} can be estimated by applying the method in \eqref{eq:mc-intro}, and in fact, even if a closed-form expression can be computed, an MC approximation may perform similarly given enough samples~\cite{blundell2015weight}.

Given a mechanism to solve \eqref{eq:vi}, the main consideration in VI is the \textit{choice of prior $p$ and the approximate posterior $q_\theta$}.
A common choice for $p$ and $q_\theta$ is Gaussian, which allows calculating \eqref{eq:kl} in a closed-form. However, this type of distribution is mainly used for computational purposes and does not reflect the nature of the data.
Choosing the correct distribution, especially the one which can incorporate the features of the analyzed data, is an open problem~\cite{ghosh2017model,farquhar2019radial,mcgregor2019stabilising,krishnan2019efficient}. In the next section, we discuss our proposed distribution, which naturally fits the data encountered in ransomware incident detection.
While we give the description of the analyzed data in Section~\ref{sec:data}, we next describe the features of the data, which are important to encapsulate in the model design.

\textbf{Spike and Slab distribution.}
The sparsity of the data
is a common problem in many areas~\cite{kang2013prevention} and was previously approached from different perspectives.
For example in the statistics community, sparsity can be addressed with both Stochastic Regression Imputation and Likelihood Based Approaches~\cite{lakshminarayan1999imputation}.
In the machine learning community,  methods based on k-nearest neighbor~\cite{batista2003study} and iterative techniques \cite{buuren2010mice} have been developed, including approaches with neural networks~\cite{sharpe1995dealing,smieja2018processing}.
Another way to tackle sparsity comes from regularization theory via L1 regularization, e.g., group LASSO~\cite{meier2008group}, sparse group LASSO~\cite{simon2013sparse} and graph LASSO~\cite{jacob2009group}.

However, we are interested in a probabilistic approach to address the sparsity in our data. From the probabilistic perspective, a common way to account for sparsity of the data in the model is to  consider an appropriate distribution. For example, the distribution can be the Horseshoe distribution~\cite{carvalho2009handling} or derivatives of the Laplace distribution~\cite{babacan2009bayesian,bhattacharya2015dirichlet}.
Another common way is, instead of one distribution, to consider  the mixture of priors with Spike and Slab components which have been widely used for Bayesian variable selection~\cite{mitchell1988bayesian,george1997approaches}.
In general, the form of the Spike and Slab distribution for random variable $w$ can be written as:
$w \sim (1-\pi) \delta_{\xi}+\pi g$,
where $\pi$ determines the probability for each mixture component, $\delta$ is spike component, which is modeled with a Dirac delta  function such that
$
\delta(w)= \begin{cases}+\infty, & w=\xi \\ 0, & w \neq \xi \end{cases}
$
and
$
\int_{-\infty}^{\infty} \delta(w) d w=1
$,
and $g$ is the slab  component, which is a general distribution of the practioner's choice.
The general idea is to explicitly introduce the sparsity component in the distribution of the data, allowing the probability mass to fully concentrate on $\xi=0$ with probability $1-\pi$, and with probability $\pi$ spread the remaining mass over the domain of the slab  component $g$.
Notice, that $\pi$ can be considered as a random variable itself, e.g., $\pi \sim Bern(\lambda)$, where $\lambda$ is either a learned parameter or a fixed value that is provided by a specialist.

The next questions are: (1) how can the `Spike and Slab' distribution be applied in a BNN, and (2) which slab  component $g$ should we consider?

\textbf{Spike and Slab BNN.}
In the BNN, all of the neural network's weights $W$ are considered to be random variables, and to use VI to solve \eqref{eq:vi}, for each layer's set of weights $W^j$ in $W=(W^1, \ldots, W^D)$, it is necessary to provide the prior $p^j$ and the approximate posterior $q_\theta^j$.
Without loss of generality, we consider a single weight $w:=w^{j, k}$, dropping the indices $j$ and $k$, and only work with the prior $p$ and the approximate posterior $q$ for the remainder of this section.

Incorporating a Spike and Slab distribution on both the prior $p$ and the approximate posterior $q$, samples $w_p$ from $p$ and $w_q$ from $q$ have the following distribution:
\begin{equation}
\small
w_p|\pi_p \sim (1-\pi_p) \delta_0+\pi_p g_p
\text{ and }
w_q|\pi_q \sim (1-\pi_q) \delta_0+\pi_q g_q,
\label{eq:ss_form}
\end{equation}
where $\pi_p \sim Bern(\lambda_p), \pi_q \sim Bern(\lambda_q)$, and $g_p, g_q$ are distributions of our choice.

As we discussed previously, the main goal of VI is to learn parameters $\theta$ of an approximate posterior $q_\theta$, by minimizing \eqref{eq:kl}.
In the case of \eqref{eq:ss_form},
$\theta = (\lambda_q, \theta_q)$,
where $\lambda_q$ is the probability of the Bernoulli distribution associated with $\pi_q$, and $\theta_q$ are the parameters of the Slab component $g_q$.
First, we state
Theorem~\ref{thm:klmix}, which allows us to compute the $KL$ term between two general Spike and Slab distributions.
\begin{theorem}
    Given two general Spike and Slab distributions such that:
    $p(w|\pi_p)=(1-\pi_p) \delta_0(w)+\pi_p g_p(w),$
    $q(w|\pi_q)=(1-\pi_q) \delta_0(w)+\pi_q g_q(w),$
    $\pi_p \sim p(\pi) = Bern(\lambda_p)$, and
    $\pi_q \sim q(\pi) = Bern(\lambda_q)$,
    with $\delta_0$ being a dirac delta function at 0 and
    $g_p, g_q$ are the pdfs of the distributions of our choice,
     the $KL\left(q(w,\pi)\|p(w,\pi)\right)$ is equal to:
     \begin{equation}
         KL\left(Bern(\lambda_q)\|Bern(\lambda_p)\right) + \lambda_q KL\left(g_q\|g_p\right).
         \label{eq:klmix}
     \end{equation}
     \label{thm:klmix}
\end{theorem}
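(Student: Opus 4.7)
The joint densities factor naturally as $q(w,\pi) = q(\pi) q(w|\pi)$ and $p(w,\pi) = p(\pi) p(w|\pi)$ with $q(\pi) = Bern(\lambda_q)$, $p(\pi) = Bern(\lambda_p)$, $q(w|\pi=0) = p(w|\pi=0) = \delta_0(w)$, $q(w|\pi=1) = g_q(w)$, and $p(w|\pi=1) = g_p(w)$. My plan is to invoke the chain rule for KL divergence to split the joint KL into a Bernoulli term and a conditional term, and then to evaluate the conditional term atom-by-atom in $\pi$, exploiting the fact that the spike components of $p$ and $q$ coincide exactly.

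\textbf{Main steps.} First, using $\log(q/p) = \log(q(\pi)/p(\pi)) + \log(q(w|\pi)/p(w|\pi))$ and taking the expectation under $q(w,\pi)$, I would write
\begin{align*}
KL\bigl(q(w,\pi) \,\|\, p(w,\pi)\bigr) = KL\bigl(q(\pi) \,\|\, p(\pi)\bigr) + \EE_{q(\pi)}\bigl[KL\bigl(q(w|\pi) \,\|\, p(w|\pi)\bigr)\bigr].
\end{align*}
The first summand is precisely $KL(Bern(\lambda_q) \,\|\, Bern(\lambda_p))$. Second, I would compute the inner divergence for each value of $\pi$: when $\pi = 0$, both conditionals equal the same Dirac mass $\delta_0$, so the KL vanishes; when $\pi = 1$, the conditionals reduce to $g_q$ and $g_p$, giving $KL(g_q \,\|\, g_p)$. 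Third, I would take the expectation of this two-valued function under $q(\pi) = Bern(\lambda_q)$, yielding $(1-\lambda_q)\cdot 0 + \lambda_q \cdot KL(g_q \,\|\, g_p) = \lambda_q KL(g_q \,\|\, g_p)$. Summing the two contributions produces \eqref{eq:klmix}.

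\textbf{Main obstacle.} The subtle point is justifying the KL computation when the spike component is a Dirac $\delta_0$ rather than a genuine density, since $\int \delta_0 \log \delta_0$ is not literally well-defined. I would address this by formulating the problem in Radon--Nikodym terms: the atom $\pi = 0$ contributes the same singular measure under both $p$ and $q$, so the density ratio $dq(w|\pi{=}0)/dp(w|\pi{=}0)$ equals $1$ almost everywhere on that atom, making the conditional KL there automatically zero without evaluating $\delta_0 \log \delta_0$. As a sanity check, I would also verify the identity via a limiting argument in which $\delta_0$ is replaced by $\mathcal{N}(0, \sigma^2)$ and $\sigma \to 0$, confirming that the clean cancellation of the spike contribution is not an artifact of working with distributions.
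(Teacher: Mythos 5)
Your proposal is correct and takes essentially the same route as the paper: the paper's proof is exactly the chain-rule decomposition written out as an explicit integral, conditioning on $\pi$ and evaluating the two atoms $\pi=0$ and $\pi=1$ separately under $q(\pi)=Bern(\lambda_q)$. Your Radon--Nikodym justification for why the spike atom contributes zero is in fact more careful than the paper, which implicitly cancels $\delta_0/\delta_0$ and only notes $\int_w \delta_0(w)\,dw = 1$.
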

\begin{proof}
    \begin{align*}
        &KL\left(q(w,\pi)\|p(w,\pi)\right) \\
        =~& \int_\pi \int_w \log\frac{q(w, \pi)}{p(w, \pi)}q(w, \pi)dw d\pi \\
        \text{given} & \text{ that } q(w, \pi)=q(w|\pi)q(\pi) \text{ and } p(w, \pi)=p(w|\pi)p(\pi)\\
        =~&\int_\pi \left\{\int_w \log\frac{q(w, \pi)}{p(w, \pi)}q(w|\pi)dw\right\} q(\pi)d\pi\\
        \text{given} & \text{ that }  q(\pi)=Bern(\lambda_q) \text{ and } p(\pi)=Bern(\lambda_p)\\
        =~& q(\pi=0)\left\{\int_w \log\frac{q(w|0)q(\pi=0)}{p(w|0)p(\pi=0)}q(w|0)dw\right\} \\
        &+q(\pi=1)\left\{\int_w \log\frac{q(w| 1)q(\pi=1)}{p(w| 1)p(\pi=1)}q(w|1)dw\right\} \\
        =~& (1-\lambda_q)\left\{\log\frac{1-\lambda_q}{1-\lambda_p}\int_w\delta_0(w)dw\right\}\\
        &+\lambda_q\left\{\log\frac{\lambda_q}{\lambda_p} + \int_w\log\frac{g_q(w)}{g_p(w)}g_q(w)dw\right\}\\
        =~& (1-\lambda_q)\log\frac{1-\lambda_q}{1-\lambda_p} + \lambda_q\log\frac{\lambda_q}{\lambda_p}\\
        &+\lambda_q \int_w\log\frac{g_q(w)}{g_p(w)}g_q(w)dw\\
        =~& KL\left(Bern(\lambda_q)\|Bern(\lambda_p)\right) + \lambda_q KL\left(g_q\|g_p\right).
    \end{align*}
\end{proof} 
\textbf{Choice of $g_q$ and $g_p$: Radial distribution.}
So far, we have shown results for a general Spike and Slab distribution. The important question is which slab  components $g$ should we consider for our approach, and if $g_q$ and $g_p$ should be from the same family?
Authors in~\cite{bai2020efficient} considered both $g_q$ and $g_p$ to be the Gaussian distribution.
However, there is emerging evidence~\cite{ farquhar2019radial, fortuin2020bayesian} that the Gaussian assumption results in poor performance of the medium to large-scale Bayesian Neural Networks.  Authors regard this as being caused by the probability mass in a high-dimensional Gaussian distribution concentrating in a narrow ``soap-bubble'' far from the mean. For this reason,~\cite{farquhar2019radial} proposed using a Radial distribution with parameters ($\mu$, $\sigma$), where samples can be generated as:
 \begin{equation}
      \mu + \sigma *\frac{\xi}{||\xi||}*|r| \sim Radial(\mu, \sigma), where \xi \sim MVN(0, I), r \sim N(0,1).
     \label{eq:samprad}
 \end{equation}

Following~\cite{farquhar2019radial}, we set up our approximate posterior $g_q$ to be the Radial distribution ($\mu$, $\sigma$), while the prior $g_p$ is Normal(0, 1).
Given equation \eqref{eq:klmix}, it is necessary to define the $KL\left(g_q\|g_p\right)$ term. Unfortunately, a closed-form solution for our choice of $g_q$ and $g_p$ is not available, and we approximate the $KL$ term using Monte Carlo sampling from equation \eqref{eq:mc-intro} with $M$ samples. This process leads to (up to a constant):
$KL\left(g_q\|g_p\right)\approx
- \log \sigma
-\frac{1}{M}\sum_{i=1}^M\log [p(w_i)]$,
where $w_i$ is sampled from the Radial distribution ($\mu$, $\sigma$) as described in equation \eqref{eq:samprad} and $p$ is the Likelihood of $N(0, 1)$.
Note that running an MC approximation for large $M$ can lead to running out of memory in either a GPU or RAM,~\cite{nazarovs2021graph}. To tackle this issue, we follow~\cite{nazarovs2021graph} and apply a graph parameterization for our Radial Spike and Slab distribution, allowing us to set $M=1000$ without exhausting the memory.

\textbf{Reparameterization trick: Gumbel-Softmax.}
Given Theorem~\ref{thm:klmix}, we can rewrite equation \eqref{eq:vi} as:
\begin{align}
    \boldsymbol{\theta}^{*} = \underset{\theta=(\lambda_q, \theta_q)}{\arg \min}\
    KL\left(Bern(\lambda_q)\|Bern(\lambda_p)\right) + \lambda_q KL\left(g_q\|g_p\right)
    -\EE_{q_{\theta}}\left[\ln p(y| W, x)\right].
\label{eq:vimix}
\end{align}
Recall, we can compute the $KL\left(Bern(\lambda_q)\|Bern(\lambda_p)\right)$ in a closed-form (inside the proof of Theorem~\ref{thm:klmix}) and
approximate the $KL\left(g_q\|g_p\right)$ term with MC sampling.
Next, there are two main aspects left for our attention:
(1) computing $\EE_{q_{\theta}}\left[\ln p(y| W, x)\right]$, which is usually approximated with Monte-Carlo sampling~\cite{kingma2013auto} because of the intractability issue, and
(2) how to do back-propagation for optimization.
The problem with back-propagation in this setting is that sampling directly from, e.g., $w\sim N(\mu,\sigma)$ with learnable parameters $\mu$ and $\sigma$, does not allow us to back-propagate through those parameters, and thus, they cannot be learned.
This issue is addressed by applying a local-reparameterization trick~\cite{kingma2015variational}.
For example, instead of sampling from $w\sim N(\mu, \sigma)$ directly, we sample  $z\sim N(0,1)$ and compute: $w=\mu+\sigma z$. This allows back-propagation to optimize the loss w.r.t. $\mu$ and $\sigma$.

While the local-reparameterization trick is obvious for members of a location-scale family, like the Gaussian distribution, and even for the selected Radial distribution, it is not clear how to apply this trick to the Bernoulli distribution, $Bern(\lambda)$.  
One way to address this issue is to approximate samples from the Bernoulli distribution with the Gumbel-Softmax~\cite{maddison2016concrete,jang2016categorical,bai2020efficient}.
That is, $\pi \sim Bern\left(\lambda\right)$
is approximated by
$\widetilde{\pi} \sim \text{Gumbel-Softmax}\left(\lambda, \tau\right)$, where
$\widetilde{\pi}=\left(1+\exp \left(-\eta / \tau\right)\right)^{-1}$,
$\eta =\log \frac{\lambda}{1-\lambda}+\log \frac{u}{1-u}$,
and $u \sim \mathcal{U}(0,1)$.
Here, $\tau$ is the parameter which is referred as the temperature. When $\tau$ approaches $0, \tilde{\pi}$ converges in distribution to $\pi$. However, in practice, $\tau$ is usually chosen no smaller than $0.5$ for numerical stability~\cite{bai2020efficient}.
Applying the Gumbel-Softmax approximation instead of optimizing the loss for parameter $\lambda_q$, we consider a new parameter
$\theta_\pi = \log{\frac{\lambda_q}{1-\lambda_q}}$.
Thus, $\lambda_q=S(\theta_\pi)=\frac{1}{1+e^{-\theta_\pi}}$,
resulting in the final learned parameters: $\theta=(\theta_\pi, \theta_q)$.

\textbf{Final Loss and Method Summary.}
A step-by-step summary of the method in provided in Algorithm~\ref{algo:method}. The final loss is given in Algorithm~\ref{algo:loss}.
\begin{mybox}[float=ht!]{gray}{}
\begin{center}

\captionof{alg}{Learning the posterior distribution of a BNN $p(W|x)$ with a Radial Spike and Slab approximate posterior, to account for sparsity of the data. \label{algo:method}
}
\vspace{-15pt}
\thickrule

\begin{algorithmic}[1]
    \footnotesize
    \Input
    \\
    Neural Network of depth $D$ with
    \\
    Weights $W_\theta=(W_\theta^{1}, \ldots, W_\theta^{D})$, which have
    \\
    Spike and Slab Radial distribution $Q_\theta$ with pdf $q_\theta$, s.t.
    \begin{itemize}
        \item $q(w|\pi_q)=(1-\pi_q) \delta_0(w)+\pi_q g_q(w; \mu, \sigma),$
        \item $g_q(w; \mu, \sigma)$ is pdf of $Radial(\mu,\sigma)$
        \item $\pi_q \sim Bern(S(\theta_\pi))$,  where $S$ is the softmax, and
    \end{itemize}
    \\
    Prior Spike and Slab distribution $P_\theta$ with pdf $p$, s.t.
    \begin{itemize}
        \item $p(w|\pi_p)=(1-\pi_p) \delta_0(w)+\pi_p g_p(w; \mu_p, \sigma_p),$
        \item $g_p(w; \mu_p, \sigma_p)$ is pdf of Gaussian distribution
        \item $\pi_p \sim Bern(\pi_p)$
    \end{itemize}

    \Output{Learned parameters $\theta=(\theta_\pi, \mu, \sigma)$}
    \algrule

    \Require  Prior distribution's parameters $(\pi_p, \mu_p, \sigma_p)$
    \algrule

    \While{$\theta$ has not converged}
        \State Minimize VI loss in equation~\eqref{eq:vifinal}, by using
        gradient descent algorithms (e.g., SGD or Adam) and doing:
        \State \textbf{Forward pass}: to compute
        \begin{itemize}
            \item $y$ with local reparameterization trick for both Radial and Bernoulli (using Gumbel-Softmax)
            \item $KL$ terms and expected log-likelihood term, using combination of closed-form and MC
        \end{itemize}
        \State \textbf{Backward pass}: compute gradients of $\theta$
    \EndWhile
\end{algorithmic}
\end{center}
\end{mybox}


\begin{mybox}[float=ht!]{gray}{}
\footnotesize
\begin{center}
\captionsetup{labelformat=empty}
\captionof{alg}{\textbf{Algorithm 2:} Final loss used for optimization in Algorithm~\ref{algo:method}.
\label{algo:loss}
}
\vspace{-15pt}
\thickrule

{\begin{flushleft} Original:\end{flushleft}\vspace{-30pt}}
\begin{align*}
        &KL\left(Bern(\lambda_q)\|Bern(\lambda_p)\right) +
    \lambda_q KL\left(g_q\|g_p\right)
    -\EE_{Q_{\theta}}\left[\ln p(y| W, x)\right]\nonumber
\end{align*}
\vspace{-20pt}
\algrule
{\begin{flushleft} Final:\end{flushleft}\vspace{-30pt}}
\begin{align}
L = \sum_{\substack{j=1,\ldots,D,\\ k=1,\ldots, l_j}}{KL_{j, k}}-\EE_{Q_{\theta}}\left[\ln p(y| W, x)\right], \text{ where}
\label{eq:vifinal}
\end{align}
 $   KL_{j, k}=(1-S(\theta_\pi^{j, k}))\log\frac{1-S(\theta_\pi^{j, k})}{1-\lambda_p^{j, k}} + S(\theta_\pi^{j, k})\log\frac{S(\theta_\pi^{j, k})}{\lambda_p^{j, k}}
        +S(\theta_\pi^{j, k})\left\{- \log \sigma^{j, k}
-\frac{1}{M}\sum_{i=1}^M\log [p(\mathbf{w_i^{j, k}})]\right\}$
\label{eq:kl_layer}
\algrule
{\begin{flushleft}
Note that based on the \textit{mean field} assumption of a BNN, the final loss $L$ includes the sum over all $KL_{j, k}$ terms, which are computed
for each $k$-th weight $w^{j,k}$ of the $j$-th layer of the BNN with parameters $\theta^{j,k} = (\theta_\pi^{j,k}, \mu^{j,k}, \sigma^{j,k})$.
In this case, the final set of trainable parameters is $\theta=\{\theta^{j,k}\}$ for $j=1,\ldots, D$ and $k=1,\ldots, l_j$.
In addition, $\EE_{Q_{\theta}}$ can be computed either in a closed-form or approximated by MC, depending on the complexity of the BNN.
\end{flushleft}}
\end{center}
\end{mybox}

\section{Experiments}
\label{sec:experiment}

\textbf{Data description.}
As described previously in Section~\ref{sec:data}, each incident is represented by a temporal sequence of events from a
knowledge base of TTPs with an
assigned label, which indicates whether it is ransomware or another type of attack.
First, the company provided 201 incidents
labeled as Ransomware and 24,913 with Non-Ransomware labels for the initial dataset.
All of the samples in this dataset were deduplicated and included 706 sparse binary features.
This first dataset was randomly split with 80\% of the examples assigned to the
training set, while the remainder were used to create a validation
set.
Second, for the test set, we received a newer, deduplicated dataset
making it independent of the training and validation sets.
This dataset included
644 Ransomware incidents and 14,696
Non-Ransomware incidents.

\textbf{Preprocessing of temporal information.}
Some of the models such the Neural ODE benefit from knowledge of the actual time associated with the recorded event, while others, including
the RNN with a GRU cell,
can be trained on the event sequence based solely on the event index (i.e., t=1,2,...). Finally, other models such as the fully connected and Bayesian Neural Networks
can be trained and tested using the aggregation of all of the events in the event sequence.
To reduce the number of time steps for the time-based models for our study, we aggregated all TTP events observed within a one minute window.
We set the aggregation time to one minute after doing hyperparameter tuning on this value.
This results in very few signals being recorded per aggregated time step.
We see that the majority of the data have a small number of features that are set, namely less than 10 out of 706 possible.
For the neural network models, we aggregated all of the TTP features into a single input vector. All of the sequences for the training and testing datasets
were truncated after one hour from the time of the first event.

\textbf{Models.}
In the experiments, we consider several baseline models from the traditional, temporal, and probabilistic deep learning settings, in addition
to our proposed model.
From the \textit{temporal perspective}, we consider two models including the Recurrent Neural
Network with a GRU cell (RNN) and the Neural ODE (NODE).
As we mentioned earlier, the traditional
recurrent neural network models (e.g., Simple RNN, GRU, LSTM) ignore
the value of the time steps and only consider the order (i.e., index), in contrast
to the Neural ODE which accounts for the time step value.
\textit{Note}, we originally considered several temporal models, which do not account
for the time value, like the traditional (i.e., Simple) RNN, the RNN with a GRU cell, the LSTM, and
the Bi-directional LSTM. However, among all of these models, the RNN with the GRU cell performed
the best, and we only include this model in the analysis below.

We also consider three deep learning models.
One is the
traditional fully connected network (FC), and two BNN models.
The first is the standard BNN which has a Gaussian approximate posterior (BNN: Gaus), and the second is our proposed Spike and
Slab Radial (BNN: Spike-Slab Radial).
For these three networks, we ignore the temporal aspect of the data by aggregating all available
features per entry with the `logical or' operator. Since our features are binary,
aggregation corresponds to summarizing the information into the set of events which
occurred during the time period.
In addition, we also considered an approach with a Bayesian Network (i.e., not a
BNN). However, the BN model failed to converge due to the sparsity and high dimensionality of the data.
Furthermore, we also trained many variants of XGBoost~\cite{10.1145/2939672.2939785}, but all of the boosted decision tree models
produced random results. Therefore, we did not include the results for XGBoost below.

\textbf{Parameter settings/hardware.}
All experiments were run on an NVIDIA P100.
 The code was implemented in PyTorch,
 using the Adam optimizer \cite{kingma2014adam} for all models,
 and trained for 400 epochs. The model with the lowest validation loss was selected for evaluation.

\textbf{Model Evaluation.}
In Figure~\ref{fig:roc}, we provide the ROC curves for the proposed model and several baselines.
For our Radial Spike and Slab BNN method and the Gaussian BNN method, we display the distribution of each model's ROC curves, shaded in green,
together with its mean value (e.g., green line).
Figure~\ref{fig:roc_within} shows that, with respect to the distribution of the ROC curves, our method outperforms the other baselines on the validation set, on average. In addition, the  Radial Spike and Slab BNN is able to provide a range of ROC curves which are significantly higher than the other baselines, if we consider the margins of the ROC distribution.
Looking at the columns for the validation set in Table~\ref{tab:sum}, we see that proposed BNN model outperforms the baseline methods, w.r.t. to AUC, accuracy, G-Mean, and other statistics.

Next, we evaluate the ability of our model to preserve the prediction power by applying it to the test data.
While this experiment fairly represents how the model might perform in practice with
application of off-line (batch) models to the new data, we should expect the decrease in performance of the model for the test set compared to the results on the validation set. Indeed, comparing Figure~\ref{fig:roc_outside} to Figure~\ref{fig:roc_within}, we see an overall performance decrease in \textit{all} models.
Despite that, we see from Figure~\ref{fig:roc_outside} that our model still outperforms the baselines on average, particularly in the region of small false positive rates, Figure~\ref{fig:roc_outsidezoom}. Our conclusion is supported by most of the statistics in the `Test Set' columns in Table~\ref{tab:sum}.

\begin{figure*}[!ht]
    \centering
    \includegraphics[width=0.7\textwidth]{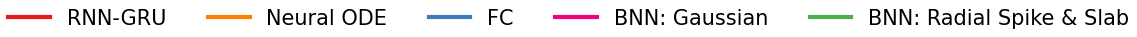}

    \begin{subfigure}{0.30\textwidth}
        \centering
        \includegraphics[width=1\textwidth]{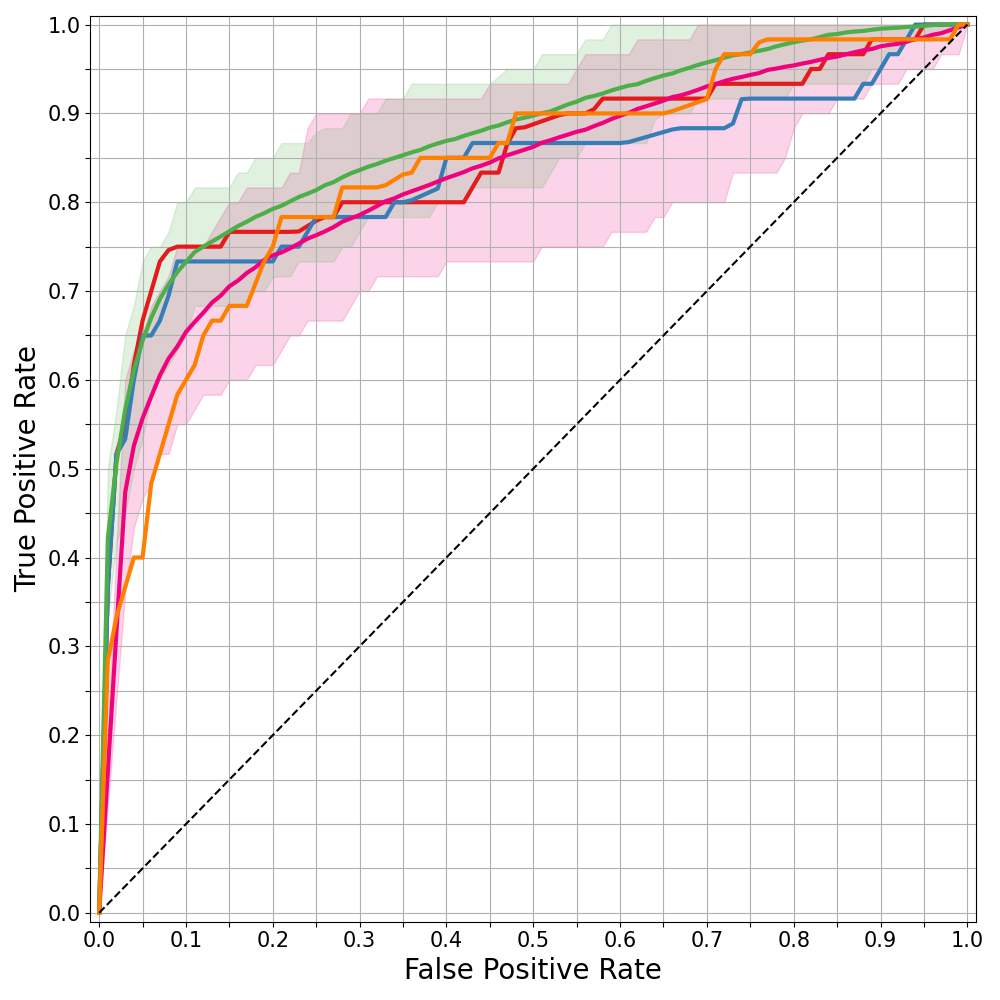}
        \caption{Validation Set}
        \label{fig:roc_within}
    \end{subfigure}
    \begin{subfigure}{0.30\textwidth}
        \centering
        \includegraphics[width=1\textwidth]{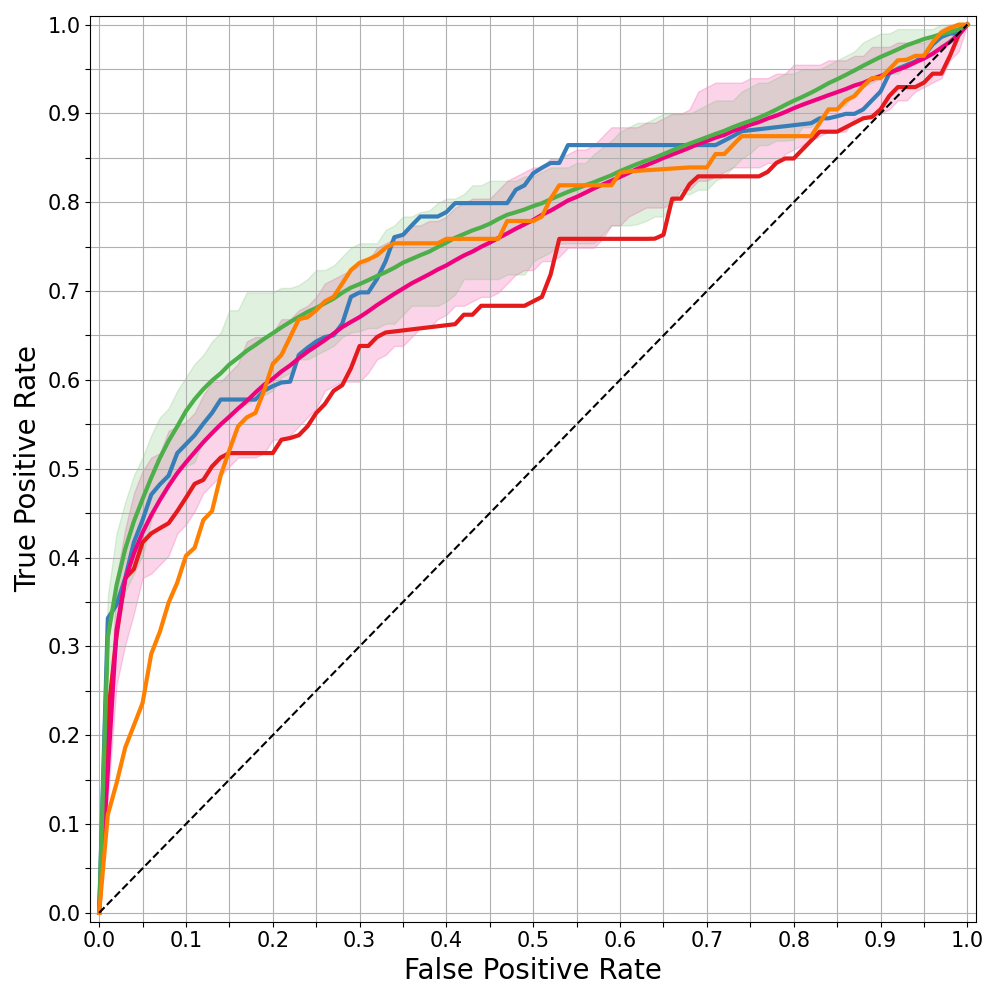}
        \caption{Test Set}
        \label{fig:roc_outside}
    \end{subfigure}
    \begin{subfigure}{0.30\textwidth}
        \centering
        \includegraphics[width=1\textwidth]{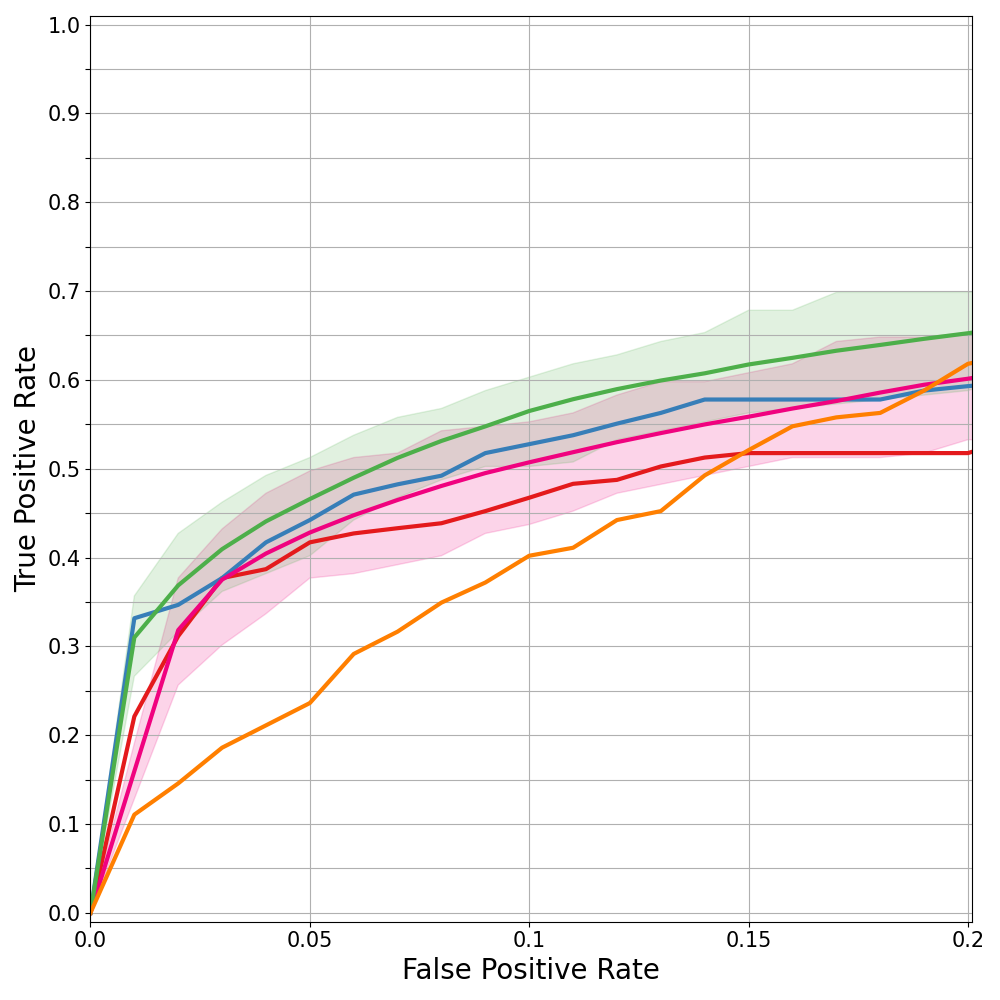}
        \caption{Test Set - Zoomed In}
        \label{fig:roc_outsidezoom}
    \end{subfigure}

    \caption{We present the ROC curves for validation set, and the new data from the future time period of time in the `Test Set'. Because the BNN is a probabilistic model, we show the distribution of the individual ROC curves (green shade) with the mean of this distribution (green line).}
    \label{fig:roc}
\end{figure*}
\begin{table*}[!ht]
    \centering

%
%
%
\resizebox{\textwidth}{!}{%
\begin{tabular}{l|ccccc|ccccc}
\specialrule{1pt}{1pt}{0pt}
\rowcolor{tableheadcolor}
\cellcolor{orange} & \multicolumn{5}{c|}{Validation Set} & \multicolumn{5}{c}{Test Set: Future Time Period} \\
\cline{2-11}
\multirow{-2}{*}{\cellcolor{orange}Statistics} &  RNN-GRU & Neural ODE &   FC & \shortstack{\\BNN: \\Gaussian} & \shortstack{\\BNN:\\ Radial Spike \& Slab} &                      RNN-GRU & Neural ODE &   FC & \shortstack{\\BNN: \\Gaussian} & \shortstack{\\BNN:\\ Radial Spike \& Slab} \\
\toprule
AUC         &                                0.85 & 0.83 & 0.83 &      0.83 &                   {\bf0.87} &                            0.70 & 0.73 & \bf 0.77 &      0.75 &                   {\bf0.77} \\
Sensitivity       &                                0.75 & \bf 0.77 & 0.73 &      0.73 &                   0.73 &                            0.47 & \bf 0.62 & 0.58 &      0.58 &                   0.54 \\
Specificity       &                                0.90 & 0.80 & 0.88 &      0.89 &                   {\bf0.93} &                            0.90 & 0.79 & 0.82 &      0.89 &                   {\bf0.92} \\
Precision         &                                0.06 & 0.03 & 0.05 &      0.05 &                   {\bf0.08} &                            0.09 & 0.06 & 0.06 &      0.10 &                   {\bf0.13} \\
FPR               &                                0.10 & 0.20 & 0.12 &      0.11 &                   \bf 0.07 &                            0.10 & 0.21 & 0.18 &      0.11 &                   \bf 0.08 \\
FNR               &                                0.25 & \bf 0.23 & 0.27 &      0.27 &                   0.27 &                            0.53 & \bf 0.38 & 0.42 &      0.42 &                   0.46 \\
FDR               &                                0.94 & 0.97 & 0.95 &      0.95 &                   \bf 0.92 &                            0.91 & 0.94 & 0.94 &      0.90 &                   \bf 0.87 \\
Accuracy          &                                0.90 & 0.80 & 0.87 &      0.89 &                   {\bf0.93} &                            0.89 & 0.79 & 0.81 &      0.88 &                   {\bf0.91} \\
Balanced Accuracy &                                0.82 & 0.78 & 0.80 &      0.81 &                   {\bf0.83} &                            0.68 & 0.71 & 0.70 &      \bf 0.74 &                   0.73 \\
$F_1$               &                                0.11 & 0.06 & 0.08 &      0.10 &                   {\bf0.14} &                            0.15 & 0.11 & 0.12 &      0.18 &                   {\bf0.21} \\
$F_2$               &                                4.13 & \bf 4.50 & 4.26 &      4.19 &                   3.90 &                            3.37 & \bf 3.93 & 3.85 &      3.43 &                   3.16 \\
G-Mean            &                                0.82 & 0.78 & 0.80 &      0.81 &                   {\bf0.83} &                            0.65 & 0.70 & 0.69 &      \bf 0.72 &                   0.70 \\
\bottomrule
\end{tabular}
}
\caption{The summary statistics are provided for both validation set and the test set, which contains data from the future.  }
\label{tab:sum}
\end{table*}

{\bf Training and Test Times.}
Training the Radial Spike and Slab Bayesian Neural Network in a single Azure-hosted Linux VM with an NVIDIA P100 for 400 epochs required 1 hour, 32 minutes and 53 seconds.
The time required to evaluate the 15,340 samples in the test set was 19 seconds.

\textbf{Feature Importance and Interpretation.}
We would like to understand which TTP features of the attack
are considered to be important by our model when making a prediction whether an
attack is ransomware or not. One method to do this is to investigate
the posterior probabilities for the first layer weights of the BNN.
However, while understanding which TTP features are important based on the BNN's trained weights conceptually makes sense,
we instead follow a more
well-known and established way to interpret the features of a
general neural network, called Integrated Gradients~\cite{sundararajan2017axiomatic}.
In Table~\ref{tab:import}, we demonstrate the subset of features which are the most important for our model to identify
whether an attack is ransomware or some other type of attack based on Integrated Gradients. From the rankings, we find that the
MITRE ATT\&CK features are significantly more important than the rule-based features.

\begin{table}
\centering
\resizebox{0.5\columnwidth}{!}{
\begin{tabular}{c|l}
\specialrule{1pt}{1pt}{0pt}
\rowcolor{tableheadcolor}
Id & Feature representation\\
\toprule
T1059.001 & Command and Scripting Interpreter, Powershell \\
T1105 & Ingress Tool Transfer  \\
T1087 & Account Discovery  \\
Rule & 20ef556c-cc3e-432a-b3ec-1050da7a9fla  \\
T1049 & System Network Connections Discovery  \\
\midrule
T1027.002  & Obfuscated Files or Information: Software Packing\\
T1566.001 & Phishing: Spearphishing Attachment  \\
T1546.001 & Event Triggered Execution: Change Default File Association  \\
T1218.003 & Signed Binary Proxy Execution: CMSTP  \\
T1055.004 & Process Injection: Asynchronous Procedure Call  \\
\bottomrule
\end{tabular}
}
\caption{The Integrated Gradients method produces a score for each of the TTP features which indicates the importance of the feature for predicting whether the attack is ransomware (top) or another type (bottom).}
\label{tab:import}
\end{table} 
\section{Related Work}
\label{sec:related}
Recently, ransomware has become an active research area~\cite{10.1145/3514229,10.1145/3479393}.
Machine learning approaches have  been proposed for the detection of ransomware attacks.
A stacked, variational autoencoder is used to detect ransomware in the industrial IoT (IIoT) setting~\cite{10.1145/3361758.3361763}.
System API calls are used to detect ransomware using Decision Trees,
a K-Nearest Neighbor classifier, and a Random Forest in~\cite{8554938}. Takeuchi et al.~\cite{10.1145/3229710.3229726} also proposed using an SVM to detect ransomware using System API calls.
Agrawal et al.~\cite{8682899} proposed a new attention mechanism on the input vector of an LSTM, an RNN and a GRU to improve the detection of ransomware attacks from API calls.
An ensemble of network traffic classifiers are used to detect network packets and flows for the Locky family of ransomware in~\cite{8674751}. A Bayesian Network was the best performing flow-based classifier in this work while a Random Tree was the best for detecting packets in this work.
HelDroid~\cite{Andronio2015HelDroid_Dissecting} uses natural language processing techniques, along with static and dynamic analysis, to detect ransomware on mobile computing devices.
Adamov and Carlsson~\cite{9225141} use reinforcement learning to simulate ransomware attacks for testing ransomware detectors.  Urooj et al.~\cite{9392548} proposed an online classifier to predict early stage ransomware, but they do not provide any details for the classifier itself.
\section{Limitations and Conclusion}
\label{sec:conc}
In this work, we propose the new Radial Spike and Slab Bayesian Neural Network and demonstrate that it outperforms the standard Bayesian Neural Network and other deep learning methods for the task of detecting ransomware attacks within the general class of all attacks, such as the dropping of commodity malware. The results can provide an early indicator of a potential ransomware attack for analysts to be
able to confirm with additional investigation.

While the model is able to learn to distinguish between ransomware attacks and other attacks, the
ROC curve indicates that it cannot be used by a fully automated system to completely disable computers or block
network access due to a potential ransomware outbreak. However, since these attacks are being diagnosed by analysts, we believe that the model
can alert these analysts that they might have an active ransomware attack on their network.

Finally, given that ransomware attacks are relatively rare compared to the downloading of commodity malware, the amount of
labeled data for these types of attacks is small. The size of our datasets from a production security service reflect this
limitation.
Fortunately, Bayesian computational methods such as Bayesian Neural Networks
can be used for training and inference without overfitting in scenarios where the amount of labeled data is limited.

\bibliography{ref}

\end{document}